\documentclass[11pt]{article}
\usepackage{complexity}

\usepackage[english]{babel}
\usepackage[utf8x]{inputenc}
\usepackage[T1]{fontenc}
\usepackage{enumitem}
\usepackage[margin=1.in,marginparwidth=1.75cm]{geometry}

\usepackage[algo2e, ruled, vlined]{algorithm2e}
\usepackage{amsmath}
\usepackage{amsfonts}
\usepackage{graphicx}
\usepackage{amsthm}
\usepackage{amssymb}
\usepackage{dsfont}
\usepackage[algo2e, ruled, vlined]{algorithm2e}
\usepackage[noend]{algorithmic} 
\usepackage{bbm}

\usepackage[T1]{fontenc}
\usepackage{microtype}          %
\usepackage[full]{textcomp}

\usepackage{lipsum}
\usepackage{xspace}
\usepackage{relsize}
\usepackage{framed}
\usepackage{xcolor}
\usepackage{graphicx}
\usepackage{multirow}
\usepackage{subcaption}
\usepackage{float}

\usepackage[colorinlistoftodos]{todonotes}
\usepackage[colorlinks=true, allcolors=blue]{hyperref}

\newtheorem{theorem}{Theorem}
\newtheorem{proposition}{Proposition}

\newtheorem{claim}{Claim}
\newtheorem{lemma}{Lemma}

\newtheorem{remark}{Remark}

\theoremstyle{definition}

\newcommand{\knote}[1]{{\bf{\color{blue} [Karthik: #1]}}}
\newcommand{\cnote}[1]{{\bf{\color{magenta} [Chandra: #1]}}}

\usepackage{float}
\usepackage{booktabs}
\usepackage{multirow}
\usepackage{thmtools} 
\usepackage{thm-restate}
\usepackage{lipsum}

\newcommand{\Z}{\mathbb{Z}}
\renewcommand{\R}{\mathbb{R}}

\newif\ifdraft
\draftfalse

\newcommand{\distg}{\mathrm{dist}_{(G,\ell_G)}}
\newcommand{\disth}{\mathrm{dist}_{(H,\ell_H)}}
\newcommand{\Tminor}{\text{$T$-distance-monotone minor}\xspace}

\newcommand{\terminalminor}{\text{$T$-minor}\xspace}

\title{Improved Lower Bound for Steiner Point Removal}

\begin{document}
\author{Karthekeyan Chandrasekaran\thanks{Grainger College of Engineering, University of Illinois, Urbana-Champaign. Email: {\tt\{karthe, chekuri, weihaoz3\}@illinois.edu}. Supported in part by NSF grant CCF-2402667.}
\hspace{1.2em} Chandra Chekuri\footnotemark[1]
\hspace{1.2em} Qingyun Chen\thanks{University of California, Santa Cruz. Email: {\tt qingyun.chen152@gmail.com}. Supported in part by NSF grant CCF-2423106 and CCF-2535599.}
\hspace{1.2em} Weihao Zhu\footnotemark[1]
}
\date{}
\maketitle

\pagenumbering{gobble}
\begin{abstract}
    In the Steiner Point Removal problem, we are given a graph $G=(V,E)$ with an edge-length function $\ell_G: E\rightarrow \R_+$ and a subset $T\subseteq V$ of terminals. The goal is to find a minor $H=(T, E_H)$ of $G$ on vertex set $T$ such that the shortest path metric derived from $G$ on the edges of $H$ preserves the distance between every pair of terminals within a small multiplicative stretch. Filtser \cite{filtser2018steiner} proved that a stretch of $O(\log |T|)$ can be achieved (in polynomial time), while Chen and Tan \cite{chen2024lower} more recently proved a lower bound of $\Omega\left(\sqrt{\frac{\log |T|}{\log\log |T|}}\right)$ on the achievable stretch.  Their lower bound is via a simple construction involving low-degree high-girth graphs. The existence of such graphs is guaranteed through the existence of low-degree high-girth expanders. In this work, we improve the lower bound to $\Omega(\sqrt{\log |T|})$ using the same simple construction of Chen and Tan but with a more careful analysis that exploits the expansion property. 
\end{abstract}

\pagenumbering{arabic}

\section{Introduction}\label{section:intro}
The Steiner Point Removal problem is about the existence and computation of a minor of a graph that approximately preserves distances between a given set of terminals. We set up the terminology needed to define the problem formally. Let $G=(V,E)$ be a graph with positive edge lengths $\ell_G:E\rightarrow \R_+$. We denote the graph along with the edge lengths by the tuple $(G, \ell_G)$. For a pair of vertices $u, v\in V$, we define $$\distg(u,v):=\min\left\{\sum_{e\in P}\ell_G(e): P \text{ is a } u-v \text{ path in } G \right\},$$
i.e., $\distg(u,v)$ is the shortest $u-v$ path distance in $G$ with respect to the length function $\ell_G$. 
A graph $H$ is a \emph{minor} of graph $G$ if $H$ is obtained from $G$ by a sequence of operations involving edge deletion, edge contraction, and vertex deletion. 
Let $T\subseteq V$ be a subset of vertices of $G$, that we call as terminals. Non-terminals of $G$ are known as Steiner points. 
A \emph{\terminalminor} of $G$ is a minor $H=(T, E_H)$ of $G$ whose vertex set is $T$. 
For a positive edge-length function $\ell_G: E\rightarrow \R_+$ associated with $G$ and a \terminalminor $H=(T, E_H)$ of $G$, we define the induced edge-length function $\ell_H: E_H\rightarrow \R_+$ as follows: for each edge $e=\{t_u, t_v\}\in E_H$, let $\ell_H(e):=\distg(t_u, t_v)$. 
For a graph $G=(V, E)$ with terminals $T$ and edge-lengths $\ell_G:E\rightarrow \R_+$ and a \terminalminor $H$ of $G$, the \emph{$\ell_G$-stretch} of $H$ is defined to be 
\begin{equation}
\max\left\{\frac{\disth(t_1,t_2)}{\distg(t_1, t_2)}: t_1, t_2\in T \text{ with } t_1\neq t_2\right\}. 
\end{equation}

In the Steiner Point Removal problem (denoted SPR), we are given an undirected graph $G=(V, E)$ with positive edge lengths $\ell_G:E\rightarrow \R_+$ and a set $T\subseteq V$ of $k$ vertices known as terminals. We will denote the number of terminals by $k$ throughout this work.
The goal is to find a \terminalminor $H$ of $G$ with minimum $\ell_G$-stretch.

Every $T$-minor $H$ of $G$ is \emph{distance-monotone}, i.e., $\disth(t_1, t_2)\ge \distg(t_1, t_2)$ for every pair of terminals $t_1, t_2\in T$ (see Lemma \ref{lemma:min-edge-length-minor}). 
A minor preserves structural information about the graph (e.g., planarity), while asking for low-stretch amounts to preserving distances. Thus, the Steiner Point Removal problem aims for a low-distortion distance minor---we want to have low-distortion in distances while also guaranteeing that $H$ excludes the same fixed minors as $G$. Gupta \cite{Gup01} studied the Steiner Point Removal problem in trees---for a tree $G$ with given edge length function $\ell_G$, he showed that a polynomial-time algorithm to construct a $T$-minor with $\ell_G$-stretch at most $8$. He also showed an unconditional lower bound: there exists a tree $G$ with edge length function $\ell_G$ and a subset $T$ of terminals such that every $T$-minor of $G$ has $\ell_G$-stretch at least $4(1-o(1))$. Chan, Xia, Konjevod, and Richa \cite{CXKR06} improved the lower bound to $8(1-o(1))$ on trees, thereby matching Gupta's upper bound. 

It is natural to ask whether arbitrary graphs admit low-stretch terminal minors. 
Kamma, Krauthgamer, and Nguy\~{\^e}n \cite{KKN15} gave a polynomial-time algorithm that finds a \terminalminor with $\ell_G$-stretch $O(\log^5{|T|})$ for a given graph $(G, \ell_G)$. 
The algorithmic upper bound was improved to $O(\log^2{|T|})$ by Cheung \cite{Che18} and subsequently, to $O(\log{|T|})$ by Filtser \cite{filtser2018steiner}. Chang, Conroy, Le, Milenkovi\'{c}, Solomon, and Than showed that an $O(1)$-factor stretch is achievable in planar graphs and proper minor-closed families of graphs \cite{CCLMST24minorfree}. 
On the lower bound side, until recently, only the $8(1-o(1))$ bound for trees was known.
Chen and Tan \cite{chen2024lower} obtained a lower bound of $\Omega(\sqrt{\log{|T|}/\log{\log{|T|}}})$. 
We emphasize that the lower bound is unconditional and does not rely on any computational intractability assumption. We improve the lower bound in this work.

\begin{theorem}\label{theorem:main-theorem}
    For every positive integer $K$, there exists a graph with positive edge lengths $(G=(V,E), \ell_G:E\rightarrow \R_+)$ and a set $T\subseteq V$ of $k$ terminals with $k>K$, such that every \terminalminor of $G$ has $\ell_G$-stretch $\Omega(\sqrt{\log k})$.
\end{theorem}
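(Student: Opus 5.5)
The plan is to use the Chen--Tan construction and strengthen only the analysis. Fix a $d$-regular expander $X$ on $k$ vertices, with $d = O(1)$, spectral (edge) expansion bounded away from $0$, and girth $g = \Omega(\log k)$. Such graphs exist for every large $k$, for example Ramanujan-type constructions, and the paper presumably records this as an earlier result. The vertices of $X$ become Steiner points, and the edges of $X$ get a small length $\varepsilon$, to be chosen later as $\varepsilon = \Theta(1/\sqrt{\log k})$. Each $v \in V(X)$ gets a pendant terminal $t_v$ joined to $v$ by an edge of length $1$. Then
\[
\distg(t_u,t_v) = 2 + \varepsilon\, d_X(u,v).
\]
In particular, neighbouring terminals are at distance $2+\varepsilon$, while the diameter is $2 + O(\varepsilon \log k)$.

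The first step is to normalise an arbitrary $T$-minor $H$. Using the standard reduction (Lemma~\ref{lemma:min-edge-length-minor} and the surrounding discussion), I would show that WLOG $H$ is given by a partition of $V$ into connected clusters $\{C_t\}_{t\in T}$ with $t \in C_t$. An edge $t t'$ lies in $E_H$ only if some $G$-edge joins $C_t$ and $C_{t'}$, and it has length $\ell_H(t t') = 2 + \varepsilon\, d_X(a,b)$ whenever $t = t_a$ and $t' = t_b$. Since $t_v$ is a leaf, either $v \in C_{t_v}$ or $C_{t_v} = \{t_v\}$ and $v$ lies in some other cluster $C_{t_w}$. In the second case $t_v$ has a unique $H$-neighbour $t_w$, so every $H$-path from $t_v$ starts with an edge of length $2 + \varepsilon\, d_X(v,w)$.

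Next comes a case analysis on the radius $R := \sqrt{\log k}$. Call $v$ \emph{far} if $v \in C_{t_w}$ with $d_X(v,w) \ge R$. If some far $v$ exists, take any neighbour $u$ of $v$. Then $\disth(t_v,t_u) \ge 2 + \varepsilon R + 2$, whereas $\distg(t_v,t_u) = 2+\varepsilon$. With $\varepsilon R = \Theta(1)$ this gives only a constant, so the threshold must be tuned. I would instead set $\varepsilon R \approx \sqrt{\log k}$, i.e.\ $\varepsilon = \Theta(1)$ and $R = \Theta(\sqrt{\log k})$; a far vertex then already yields stretch $\Omega(\sqrt{\log k})$. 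So assume every cluster has $X$-radius $< R \ll g$. Because $R < g/2$, each cluster, intersected with $X$, induces a tree, and the contracted graph $X/\{C_t\}$ has all of its cycles coming from cycles of $X$.

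The main obstacle is this remaining case: all clusters are small, and we must find two terminals that are close in $G$ but whose $H$-distance is about $\sqrt{\log k}$ times larger. The idea is to use expansion plus girth. Pick a terminal $t_a$ and consider the ball of radius $r = \Theta(\log k)$ around $a$ in $X$; this ball is tree-like up to $g/2$. An $H$-path between $t_a$ and $t_b$ whose clusters have radius $< R$ must take at least $d_X(a,b)/(2R)$ hops, and each hop costs at least $2$. So
\[
\disth(t_a,t_b) \ge \frac{d_X(a,b)}{R},
\]
while $\distg(t_a,t_b) = 2 + \varepsilon\, d_X(a,b)$. With $\varepsilon = \Theta(1/R)$ and $d_X(a,b) = \Theta(\log k)$, i.e.\ the graph diameter, the ratio is only $O(1)$. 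This is exactly the tension that produced the $\sqrt{\log k/\log\log k}$ bound of Chen and Tan. To break it, I would use expansion to show that many vertices lie near cluster boundaries: a random walk or BFS from $a$ crosses $\Omega(\ell/R)$ cluster boundaries in $\ell$ steps, with a non-backtracking count made possible by high girth. I would also argue that the hop-cost per boundary crossing is $2$ while the $G$-cost per $X$-edge is $\varepsilon$, and then optimise $\varepsilon$ jointly with $R$ to balance the two cases at $\sqrt{\log k}$. The lost $\log\log k$ factor should disappear because expansion gives a linear-in-$\ell$ count of crossings, rather than a union bound over paths. I expect making this counting rigorous to be the hardest step.
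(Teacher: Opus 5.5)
\textbf{Verdict: the proposal has a genuine gap.} Your instance is the paper's one after rescaling (unit expander edges, pendant length $L=\lceil\sqrt{\log k}\rceil$, i.e.\ your $\varepsilon=1/L$). However, the parameters you commit to are wrong, and the case that carries the whole lower bound is left unproved.

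\textbf{The parameter choice.} With $\varepsilon=\Theta(1)$ the instance is easy. Contracting each $v$ into $t_v$ gives $\disth(t_a,t_b)=(2+\varepsilon)\,d_X(a,b)$ against $\distg(t_a,t_b)=2+\varepsilon\, d_X(a,b)$, so the stretch is at most $(2+\varepsilon)/\varepsilon=O(1)$. The correct way to split $\varepsilon R\approx\sqrt{\log k}$ is $\varepsilon=\Theta(1/\sqrt{\log k})$ and $R=c\log k$, with $c$ small enough that $R$ is a constant fraction of the girth. The paper uses radius $\log k/50$ against girth $\log k/10$. With these values your far-vertex argument goes through; it is essentially Lemma~\ref{lemma:same-terminal-distance}. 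But your hop-count bound $d_X(a,b)/R$ then becomes vacuous, so everything rests on the second case.

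\textbf{The second case.} This case is not proved, and the sketch would not suffice as stated. First, ``$\Omega(\ell/R)$ crossings in $\ell$ steps'' is far too weak when $R=\Theta(\log k)$ and $\ell=\Theta(\sqrt{\log k})$. Second, even a short $X$-path with many cluster crossings says nothing about $\disth$ by itself, since a shortest $H$-path between its endpoints need not follow it. In the paper's normalization the missing ideas are as follows.
\begin{enumerate}
\item[(i)] Cluster radius below $c\log k$ together with $3$-regularity forces every cluster to have fewer than $k/2$ vertices. Edge expansion applied cluster by cluster then yields at least $k/40$ inter-cluster (special) edges, a constant fraction of all edges.
\item[(ii)] Since $L<\mathrm{girth}$, the $L$-edge paths are exactly the non-backtracking walks of length $L$. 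For $d=3$ there are $3k\cdot 2^{L-1}$ of them, and each edge lies on exactly $L\cdot 2^{L}$ of them. Averaging gives an $L$-edge path $P$ with at least $L/60$ special edges, whose end terminals are at $G$-distance at most $3L$.
\item[(iii)] Small clusters plus large girth make the $G$-shortest path realizing each $H$-edge unique, and this path contains exactly one special edge (Lemma~\ref{lemma:special-edge}).
\item[(iv)] Concatenate these $G$-paths along a shortest $H$-path to get a walk $R$ in $G$. By girth, one of two things happens.
\begin{itemize}
\item If $P\subseteq R$, the $H$-path has at least $L/60$ edges, each of length at least $L$.
\item Otherwise $P\cup R$ contains a cycle, so $\ell_G(R)\ge \mathrm{girth}(G,\ell_G)-L$.
\end{itemize}
Either way $\disth=\Omega(\log k)$ against $\distg\le 3L$, which gives the $\Omega(\sqrt{\log k})$ stretch.
\end{enumerate}
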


Our lower bound instance is almost identical to that of Chen and Tan \cite{chen2024lower}. Their construction is  simple. Start from a low-degree high-girth graph $G$ and declare all edge lengths as unit.
For each vertex $v$ of $G$ add a pendant terminal $t_v$ with an edge of length of $L = \Theta(\sqrt{\log{k}\log{\log{k}}})$, and make all the original vertices as Steiner vertices.  The existence of low-degree high-girth graphs are guaranteed through Ramanujan graphs, which are also expanders. Our construction is essentially the same except that we use low-degree high-girth ``expander'' and change $L$ to $\Theta(\sqrt{\log{k}})$. We note that the lower bound construction of ``expanders with pendants/tails'' has appeared in the literature before in the context of integrality gaps for $0$-extension and metric labeling \cite{CKR05, KKMR09, CT24}.

Although the proof of Chen and Tan is short, its key step is a cleverly chosen quantity measuring the size of a collection of short paths needed to cover a fixed graph path. Our proof instead proceeds through simple and modular steps. Assuming low stretch, we show that the shortest path underlying each minor edge crosses exactly once between contraction clusters. Expansion and a direct counting argument then produce a short path containing many such crossings. Finally, large girth forces the corresponding terminal path either to have a long image in the original graph or to use many minor edges, in both cases contradicting small stretch.
In Section \ref{sec:tightness}, we show that our analysis is tight for the instances we construct. 
Thus, improving the lower bound further may require a substantially different construction.
\section{Preliminaries}\label{section:prelim}

Let $H$ be a \terminalminor of $G$. The following lemma shows that for every pair of terminals, the distance between the pair in $(H, \ell_H)$ is at least the distance between the pair in $(G, \ell_G)$.


\begin{lemma}\cite{chen2024lower}\label{lemma:min-edge-length-minor}
    Let $G=(V, E)$ be an undirected graph with positive edge lengths $\ell_G:E\rightarrow \R_+$ and $T\subseteq V$ be a subset of vertices. For every \terminalminor $H=(T, E_H)$ of $G$, let $\ell_H:E_H\rightarrow \R_+$ be the function defined as $\ell_H(e):=\distg(t_u, t_v)$ for every $e=\{t_u,t_v\}\in E_H$. Then, $\disth(t_1, t_2)\ge \distg(t_1, t_2)$ for every $t_1, t_2\in T$.
\end{lemma}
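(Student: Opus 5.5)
The plan is to show that every edge of $H$ has length at least the $G$-distance between its endpoints, and then sum along a shortest path in $H$.

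Recall how a $T$-minor arises. After deleting vertices and edges and contracting, each terminal $t\in T$ corresponds to a connected branch set $B_t\subseteq V$ with $t\in B_t$. The sets $B_t$ are pairwise disjoint, and each induces a connected subgraph of $G$. An edge $\{t_u,t_v\}\in E_H$ exists only if some edge of $G$ joins $B_{t_u}$ and $B_{t_v}$. This structure is not needed here, because $\ell_H$ is defined directly by $\ell_H(\{t_u,t_v\})=\distg(t_u,t_v)$. It only matters that every edge of $H$ joins two terminals.

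Fix $t_1,t_2\in T$. If $t_1$ and $t_2$ lie in different components of $H$, then $\disth(t_1,t_2)=\infty$ and the inequality is trivial. Otherwise, let $t_1=s_0,s_1,\dots,s_m=t_2$ be a shortest $t_1$–$t_2$ path in $(H,\ell_H)$. Then
\[
\disth(t_1,t_2)=\sum_{i=1}^m \ell_H(\{s_{i-1},s_i\})=\sum_{i=1}^m \distg(s_{i-1},s_i)\ge \distg(t_1,t_2).
\]
The last step holds because $\distg$ is a metric on $V$: the shortest-path distance with positive lengths satisfies the triangle inequality, since walks can be concatenated. Iterating the triangle inequality along the path gives the bound. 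The case $t_1=t_2$ is trivial, with both sides equal to $0$.

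There is no real obstacle. The only point needing care is confirming that each $H$-edge length equals the $G$-distance between its endpoints, which holds by definition. If $\ell_H$ were instead defined through the branch sets, for example as the length of a connecting path through $B_{t_u}\cup B_{t_v}$, one would additionally argue that this quantity is at least $\distg(t_u,t_v)$. That follows because the connecting path is itself a $t_u$–$t_v$ walk in $G$.
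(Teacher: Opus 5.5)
Your proof is correct and is the same argument as the paper's. Both take a shortest $t_1$--$t_2$ path in $(H,\ell_H)$, replace each edge length by $\distg$ of its endpoints using the definition of $\ell_H$, and apply the triangle inequality for $\distg$ along the path.
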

\begin{proof}
    For every two terminals $t_1, t_2 \in T$, let path $P=(p_0, p_1, \ldots, p_r)$ be the shortest path in $(H, \ell_H)$ connecting $t_1$ and $t_2$, where $p_0=t_1$ and $p_r=t_2$. Hence, we have
    $$\begin{aligned}
        \disth (t_1, t_2) &= \sum_{i=0}^{r-1} \disth (p_i, p_{i+1}) \\
        &= \sum_{i=0}^{r-1} \distg (p_i, p_{i+1}) \ \ \ \text{(by definition of $\ell_H$)} \\
        &\geq \distg (t_1, t_2).
    \end{aligned}$$
\end{proof}


For simplicity, all logarithms are by default to the base of $2$. Let $G=(V,E)$ be an undirected graph with positive edge lengths $\ell_G:E\rightarrow \R_+$. We will denote such a graph by the tuple $(G, \ell_G)$. If we do not specify the edge lengths, then all edge lengths are assumed to be unit.
For every path $P$ in graph $G$, its length is defined as the sum of the length of its edges, which is denoted by $\ell_G(P)$. We say two paths $P=(p_0, p_1, \ldots, p_{r_1})$ and $Q=(q_0, q_1, \ldots, q_{r_2})$ are different if $r_1\neq r_2$ or $p_i\neq q_i$ for some $0\leq i \leq r_1$. We say $P\subseteq Q$ if every edge in $P$ is contained in $Q$. We define $P \cup Q$ as the set of edges belonging to $P$ or $Q$. Let $\mathrm{girth}(G, \ell_G)$ be the length of the shortest cycle in $(G, \ell_G)$. For every positive integer $d\geq 1$, we say that graph $G$ is \textit{$d$-regular} if every vertex in $V$ has degree exactly $d$. For every subset $S\subseteq V$ of vertices, we define $\delta(S)$ as the set of edges intersecting both $S$ and $V\setminus S$ and $G[S]$ as the subgraph of $G$ induced by $S$.

\section{The Instance}\label{section:instance}

In this section, we will use low-degree high-girth expander (i.e., Ramanujan graphs) to construct our large-stretch instance.

\begin{lemma}\label{lemma:cayley-graph}\cite{lubotzky1988ramanujan,alon2021high}
    For every positive integer $K$, there exists an undirected graph $G=(V,E)$ with $k>K$ vertices that has the following properties: 
    \begin{enumerate}
        \item $G$ is $3$-regular,
        \item $|\delta(S)|\geq \frac{|S|}{20}$ for every $S\subseteq V$ with $|S|\leq \frac{k}{2}$, and 
        \item $\mathrm{girth}(G)\geq \frac{\log k}{10}$.
    \end{enumerate}
\end{lemma}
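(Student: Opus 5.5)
This lemma is an existence statement imported from the literature, so I would obtain it by instantiating a known family of Ramanujan graphs and checking the three constants, rather than proving anything from scratch.

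\textbf{Choice of family.} I would take the Lubotzky--Phillips--Sarnak (or Morgenstern) Ramanujan graphs. For a prime power $q$ these are $(q+1)$-regular and have girth at least $\frac{4}{3}\log_q n$ in the bipartite case, where $n$ is the number of vertices. They are also Ramanujan: every nontrivial eigenvalue has absolute value at most $2\sqrt{q}$. Taking $q=2$ gives $3$-regular graphs, and Morgenstern's construction provides exactly this case for every prime power, in particular $q=2$. This family is infinite, so for any $K$ we can pick a member with $k>K$ vertices. Alternatively, \cite{alon2021high} directly gives $3$-regular graphs that are simultaneously near-Ramanujan and of girth $\Omega(\log k)$, so properties 1 and 3 hold with some explicit constant.

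\textbf{Girth (property 3).} With $q=2$ we get $\mathrm{girth}(G)\ge \frac{4}{3}\log_2 k \ge \frac{\log k}{10}$. If the cited construction only guarantees girth $c\log k$ for a smaller constant $c$, I would check that $c\ge 1/10$; the constant $1/10$ is loose enough that every standard construction satisfies it.

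\textbf{Expansion (property 2).} Here I would use the discrete Cheeger inequality, in its easy direction:
\[
|\delta(S)|\ \ge\ \frac{(d-\lambda_2)\,|S|\,|V\setminus S|}{k}\ \ge\ \frac{d-\lambda_2}{2}\,|S| \qquad\text{for } |S|\le k/2 .
\]
For $d=3$ and $\lambda_2\le 2\sqrt 2<2.83$ we have $d-\lambda_2\ge 0.17$, which gives $|\delta(S)|\ge 0.085|S|\ge |S|/20$. This inequality follows from the Rayleigh quotient applied to the test vector $\mathbf 1_S-\frac{|S|}{k}\mathbf 1$.

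\textbf{Main obstacle.} The spectral gap for $d=3$ is small, so the constant must be checked carefully. If the cited family is only near-Ramanujan, with $\lambda_2\le 2\sqrt2+\epsilon$, we still need $\epsilon<3-2\sqrt2-0.1=0.07$, which holds for $k$ large. A second issue is bipartiteness: the bipartite LPS/Morgenstern graphs have eigenvalue $-3$. This does not matter, because Cheeger's bound uses only $\lambda_2$, the second largest eigenvalue, and not the spectral radius.
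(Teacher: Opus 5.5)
Your proposal is correct and follows the paper's route, which simply invokes known high-girth Ramanujan families. You also supply the constant check the paper leaves implicit: the easy direction of Cheeger gives $|\delta(S)|\ge \frac{3-2\sqrt2}{2}|S|\approx 0.086|S|\ge |S|/20$. You rightly point to Morgenstern's $q=2$ graphs, since the LPS graphs themselves require primes $p\equiv 1 \pmod 4$ and so have degree at least $6$.

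One caution: do not rely on the \cite{alon2021high} alternative for property 2. Its stated bound (as I recall it) is $\lambda_2\le\frac{3}{\sqrt2}\sqrt d$ for $(d+1)$-regular graphs, which becomes the vacuous $\lambda_2\le 3$ in the cubic case $d=2$, so it is not near-Ramanujan in your $2\sqrt2+\epsilon$ sense.
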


For a given positive integer $K$, let $G_0=(V_0,E_0)$ be the graph guaranteed to exist in Lemma~\ref{lemma:cayley-graph}. For every vertex $u\in V_0$, we add a new vertex $t_u$ that is adjacent to $u$. Let $G=(V,E)$ be the resulting graph. Let $L:=\lceil \sqrt{\log{k}} \rceil$. For every edge $e\in E$, we set $\ell_G(e):=1$ if $e\in E_0$, and $\ell_G(e):=L$ otherwise. We note that $G$ has $2k$ vertices. We define the set $T:=\{t_u:u\in V_0\}$ as the set of terminals. Figure~\ref{figure:instance} illustrates an instance when $G_0$ contains $4$ vertices. 

\begin{figure}[h]
    \centering
    \includegraphics[width=0.6\linewidth]{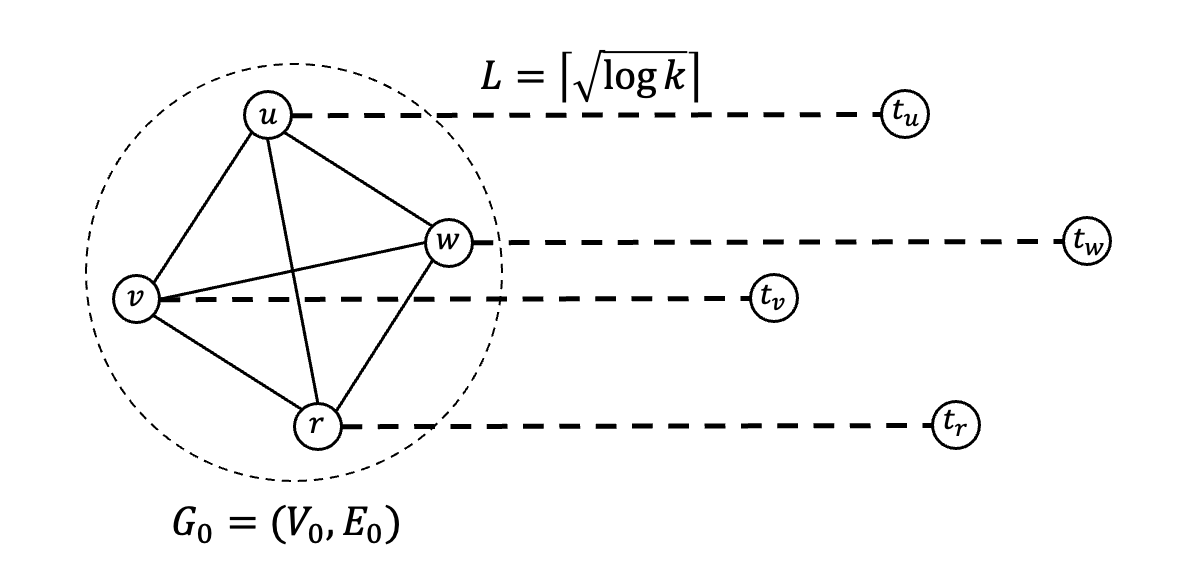}
    \caption{The figure shows graph $G=(V,E)$ when $G_0$ contains $k=4$ vertices, where $V_0=\{u,v,w,r\}$ are the four vertices on the left side and their corresponding terminals are on the right side. Every edge $e\in E_0$ has length $\ell_G(e)=1$, while every other edge $e\in E\setminus E_0$ has length $\ell_G(e)=L=\lceil \sqrt{\log k} \rceil$.
    }
    \label{figure:instance}
\end{figure}

\section{Properties of the instance}

We now summarize several useful properties of the instance. First, a \terminalminor $H$ of $G$ can be obtained by only deleting edges and contracting edges and without any vertex deletions, because if a non-terminal vertex $u$ is deleted, then the terminal $t_u$ in $H$ will be isolated from any other vertex. Consequently, the \terminalminor $H$ can be represented by a mapping $f:V\rightarrow T$ from vertices to terminals, where for every non-terminal $u$, $f(u)$ denotes the terminal to which the non-terminal vertex $u$ is contracted. For every terminal $t_u \in T$, we define $f(t_u):=t_u$. For every vertex $u\in V_0$, we define $B_u:=\{v\in V_0: f(v)=t_u\}$. We observe that $G_0[B_u]$ is a connected subgraph for every $u\in V_0$ if $B_u$ is non-empty.

The following proposition shows that every edge in a \terminalminor of $G$ has edge-length at least $L$. 
\begin{proposition}\label{proposition:edge-length}
    Let $H=(T, E_H)$ be a \terminalminor of $G$. Then, every edge $e=\{t_u, t_v\}\in E_H$ has $\ell_H(e)\ge L$.
\end{proposition}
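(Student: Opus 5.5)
Since $\ell_H(e)$ is defined as $\distg(t_u,t_v)$, the plan is to lower bound the $G$-distance between any two distinct terminals directly.

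Take an edge $e=\{t_u,t_v\}\in E_H$; since $H$ is a simple graph on vertex set $T$, we have $t_u\neq t_v$, and hence $u\neq v$.

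In $G$, each terminal $t_u$ is a pendant vertex. Its only neighbor is $u$, and the connecting edge has length $L$. Any $t_u$--$t_v$ path $P$ in $G$ has distinct endpoints, so it has at least one edge. Its first edge must be $\{t_u,u\}$ and its last edge must be $\{v,t_v\}$, because these are the unique edges incident to the two endpoints.

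These two edges are distinct since $u\neq v$. Therefore $\ell_G(P)\ge 2L$, and so $\ell_H(e)=\distg(t_u,t_v)\ge 2L\ge L$. This gives the stronger bound $\ell_H(e)\ge 2L$.

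There is no real obstacle here. The only point to check is that minor edges join distinct terminals. This holds because self-loops created by contraction are discarded, or in any case are not considered edges $\{t_u,t_v\}$ with $t_u\neq t_v$, which are the ones relevant to the stretch.
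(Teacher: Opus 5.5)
Your proof is correct and takes essentially the same approach as the paper. The paper writes $\ell_H(e)=\distg(t_u,t_v)=\distg(u,v)+2L$, which rests on the same observation you make: each terminal is a pendant vertex attached by an edge of length $L$, so every $t_u$--$t_v$ path must use both distinct pendant edges.
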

\begin{proof}
    For every edge $e=\{t_u, t_v\}\in E_H$, we have
    $$\begin{aligned}
        \ell_H(e) = \distg(t_u, t_v) = \distg(u,v)+2L > L. 
    \end{aligned}$$
\end{proof}

The next proposition shows a girth property. 
\begin{proposition}\label{proposition:girth}
    For every pair of distinct vertices $u,v \in V_0$, let $P$ be a $u-v$ path in $G$ and $Q$ be a $u-v$ walk in $G$. Then, either $P\subseteq Q$ or $\ell_G(P)+\ell_G(Q)\geq \mathrm{girth}(G, \ell_G)$.
\end{proposition}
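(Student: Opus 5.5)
The plan is to prove the contrapositive. Assume $P\not\subseteq Q$; we will exhibit a cycle in $G$ whose length is at most $\ell_G(P)+\ell_G(Q)$. Then the definition of $\mathrm{girth}(G,\ell_G)$ as the length of the shortest cycle gives the bound.

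Since $P\not\subseteq Q$, there is an edge $e=\{a,b\}$ of $P$ that is not an edge of $Q$. Removing $e$ from $P$ splits it into a $u-a$ subpath $P_1$ and a $b-v$ subpath $P_2$, oriented so that $P$ reads $P_1, e, P_2$. These two subpaths are vertex-disjoint because $P$ is a path.

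Concatenate $P_1$, then $Q$ from $u$ to $v$, then $P_2$ traversed backwards from $v$ to $b$. This gives an $a-b$ walk $W$ in $G$. It uses only edges of $P\cup Q$ other than $e$: the edges of $P_1$ and $P_2$ differ from $e$ because $P$ is a path, and $Q$ avoids $e$ by the choice of $e$. Moreover, $\ell_G(W)\le \ell_G(P)-\ell_G(e)+\ell_G(Q)$.

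The only step needing care is turning this walk into a cycle. Any walk from $a$ to $b$ contains an $a-b$ path $R$ using a subset of its edges; one gets $R$ by repeatedly shortcutting repeated vertices. So $R$ avoids $e$, and $R$ has at least one edge since $a\neq b$. Adding $e$ to $R$ closes a cycle $C$ in $G$, using only edges of $P\cup Q$, with
\[
\ell_G(C)\le \ell_G(R)+\ell_G(e)\le \ell_G(P)+\ell_G(Q).
\]
There is a subtle point in the edge-length accounting. The length of $R$ is bounded by the sum over the walk, which counts repeated edges with multiplicity, so the bound $\ell_G(R)\le \ell_G(W)$ holds because edge lengths are positive.

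Finally, note that $C$ is a genuine cycle with at least three edges, since $G$ is simple and $R$ avoids $e$. Hence $\mathrm{girth}(G,\ell_G)\le \ell_G(C)\le \ell_G(P)+\ell_G(Q)$. The pendant edges of length $L$ do not affect this: a cycle can never use the pendant edge at $t_w$, since $t_w$ has degree one. The argument is therefore purely combinatorial and does not need $u,v\in V_0$ beyond their being distinct.
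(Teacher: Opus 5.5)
Your proof is correct and is the same argument as the paper's, which simply asserts that the edges of $P\cup Q$ contain a cycle when $P\not\subseteq Q$. You supply the omitted details: an edge $e=\{a,b\}\in P\setminus Q$ is not a bridge of $P\cup Q$, and closing an $a$--$b$ path extracted from your walk with $e$ gives a cycle of length at most $\ell_G(P)+\ell_G(Q)$. One small slip: for $W$ to be an $a$--$b$ walk you must traverse $P_1$ backwards, from $a$ to $u$, before following $Q$, just as you do with $P_2$.
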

\begin{proof}
    Assume that $P\not\subseteq Q$. Then, edges in $P \cup Q$ contains a cycle, which implies that $\ell_G(P)+\ell_G(Q)\geq \mathrm{girth(G, \ell_G)}$.
\end{proof}



We first show that a low-stretch minor cannot contract a vertex far from its terminal.


\begin{lemma}\label{lemma:same-terminal-distance}
    Suppose $H$ is a \terminalminor of $G$ with $\ell_G$-stretch $\alpha \leq \frac{\sqrt{\log k}}{200}$.
    For every pair of vertices $u, v\in V_0$ with $f(v)=t_u$,
    $$\distg(u,v)\leq \frac{\log k}{50}.$$
\end{lemma}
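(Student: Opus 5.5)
The idea is to look at the terminal $t_v$ and exploit the fact that its unique neighbour $v$ has been contracted into the cluster of $t_u$. This makes $t_v$ a leaf of $H$ hanging off $t_u$. Any $H$-path out of $t_v$ therefore pays the full length $\distg(t_u,t_v)$, whereas $t_v$ has a terminal at $G$-distance only $2L+1$.

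If $u=v$ there is nothing to prove, so assume $u\neq v$. In $G$ the terminal $t_v$ has exactly one neighbour, namely $v$. Since $f(v)=t_u$, after the contractions the edge $\{v,t_v\}$ becomes an edge between $t_v$ and $t_u$. Moreover, every edge of $H$ incident to $t_v$ arises from an edge of $G$ between $t_v$ and a vertex of the cluster $f^{-1}(t_u)$. Hence $t_u$ is the only neighbour of $t_v$ in $H$. The degree of $t_v$ in $H$ is at least one, since otherwise $t_v$ would be isolated and $\disth$ would be infinite, contradicting the finite stretch $\alpha$. Consequently every $H$-path starting at $t_v$ and ending at a terminal other than $t_v$ or $t_u$ first uses the edge $\{t_v,t_u\}$. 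This edge has length $\ell_H(\{t_v,t_u\})=\distg(t_u,t_v)=\distg(u,v)+2L$.

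Next, pick a neighbour $w$ of $v$ in $G_0$ with $w\neq u$; one exists because $G_0$ is $3$-regular. Then $t_w\notin\{t_u,t_v\}$ and $\distg(t_v,t_w)=2L+1$. By the previous paragraph,
$$\disth(t_v,t_w)\ \ge\ \distg(u,v)+2L.$$
The stretch bound gives $\disth(t_v,t_w)\le\alpha(2L+1)$, and therefore $\distg(u,v)\le \alpha(2L+1)-2L < \alpha(2L+1)$.

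Finally, plug in $L=\lceil\sqrt{\log k}\rceil\le\sqrt{\log k}+1$ and $\alpha\le\sqrt{\log k}/200$ to get
$$\distg(u,v)\le \frac{\sqrt{\log k}\,(2\sqrt{\log k}+3)}{200}\le \frac{\log k}{50}.$$
The last inequality holds because $\frac{\log k}{100}+\frac{3\sqrt{\log k}}{200}\le\frac{\log k}{50}$ is equivalent to $\sqrt{\log k}\ge 3/2$. This is ensured by taking $K$ large, and the parameters of Lemma~\ref{lemma:cayley-graph} already presuppose $k$ large. The only delicate step is the structural claim that $t_v$ is a leaf of $H$ attached to $t_u$. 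It follows directly from $t_v$ having degree one in $G$ and from the description of $H$ via the map $f$, so I expect no real obstacle.
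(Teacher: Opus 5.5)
Your proof is correct and follows the paper's argument: both rest on $t_v$ being a leaf of $H$ attached to $t_u$, so the $H$-distance from $t_v$ to the terminal of any $G_0$-neighbour of $v$ is at least $\ell_H(\{t_v,t_u\})=\distg(t_u,t_v)=2L+\distg(u,v)$, while the $G$-distance is only $2L+1$. The differences are cosmetic: the paper picks that neighbour inside $B_u$ (using connectivity of $G_0[B_u]$ and a case split on whether it equals $u$), which your arbitrary choice of $w\neq u$ shows is unnecessary; and dropping the $-2L$ term in your final estimate needs $\sqrt{\log k}\ge 3/2$, which is automatic because the stretch is always at least $1$ by Lemma~\ref{lemma:min-edge-length-minor}, so the hypothesis already forces $\log k\ge 40000$.
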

\begin{proof}
    For the case of $u=v$, we have $\distg(u,v)=0$.
    We may now assume that $u\neq v$. Since $f(v)=t_u$, we have $v\in B_u$. We note that $f(u)=t_u$, since $u$ is the only neighbor of terminal $t_u$ in $G$, which shows that $u\in B_u$. Because $G_0[B_u]$ is a connected subgraph of $G_0$, there exists a neighbor $v'$ of $v$ in $G_0$ with $f(v')=f(v)=t_u$. If $v'=u$, then $u$ and $v$ are adjacent in graph $G_0$ and hence, $\distg(u,v)=1$. Otherwise, $t_u$ is the only neighbor of terminals $t_v$ and $t_{v'}$ in \terminalminor $H$. Therefore,
    $$\begin{aligned}
        \alpha\geq \frac{\disth(t_v, t_{v'})}{\distg(t_v, t_{v'})} &=\frac{\disth(t_v,t_u)+\disth(t_u, t_{v'})}{\distg(t_v, t_{v'})} \\
        &\geq \frac{\disth(t_v,t_u)}{\distg(t_v, t_{v'})} \geq \frac{\distg(t_v,t_u)}{\distg(t_v, t_{v'})} \ \ \text{(by Lemma~\ref{lemma:min-edge-length-minor})} \\
        &=\frac{\distg(t_u, t_v)}{2L+1} \ \ \text{(since $v\sim v'$ in graph $G_0$)} \\
        &=\frac{2L+\distg(u,v)}{2L+1}.
    \end{aligned}$$
    We recall that $\alpha\leq \frac{\sqrt{\log k}}{200}$ and $L=\lceil \sqrt{\log k}\rceil$. This shows that
    $$\begin{aligned}
        \distg(u,v) &\leq \alpha\cdot (2L+1)-2L \leq \frac{\sqrt{\log k}}{200}\cdot (2\sqrt{\log k}+3)-2\sqrt{\log k} \leq \frac{\log k}{50}.
    \end{aligned}$$
\end{proof}

Our next lemma shows that if a vertex $v$ is contracted to a terminal $t_u$, then the shortest path between $t_u$ and $v$ in $G$ is unique. Moreover, every vertex on that path is also contracted to $t_u$.

\begin{lemma}\label{lemma:same-terminal-shortest}
    Suppose $H$ is a \terminalminor of $G$ with $\ell_G$-stretch $\alpha \leq \frac{\sqrt{\log k}}{200}$.
    For every pair of vertices $u, v\in V_0$ with $f(v)=t_u$, the shortest path $P$ between $t_u$ and $v$ in $G$ is unique. Moreover, for every vertex $w$ on path $P$, $f(w)=t_u$. 
\end{lemma}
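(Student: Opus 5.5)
Uniqueness of the shortest path comes from the girth bound, and the contraction claim from connectivity of $B_u$, again via girth.

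\emph{Uniqueness.} Any shortest path from $t_u$ to $v$ in $G$ consists of the pendant edge $\{t_u,u\}$ of length $L$ followed by a shortest $u-v$ path in $G_0$. The reason is that $t_u$ has $u$ as its only neighbour, and a shortest path never detours through another terminal, since each terminal is a leaf. So it suffices to show that the shortest $u-v$ path in $G_0$ is unique. By Lemma~\ref{lemma:same-terminal-distance}, $\distg(u,v)\le \frac{\log k}{50}$. If two distinct $u-v$ paths $P_1,P_2$ of this length existed, then $P_1\not\subseteq P_2$, because they have equal length and are different. Proposition~\ref{proposition:girth} would then give
\[
\mathrm{girth}(G,\ell_G)\le \ell_G(P_1)+\ell_G(P_2)\le \frac{\log k}{25}.
\]
Cycles in $G$ lie entirely in $G_0$, since pendant edges lie on no cycle, so $\mathrm{girth}(G,\ell_G)=\mathrm{girth}(G_0)\ge \frac{\log k}{10}$ by Lemma~\ref{lemma:cayley-graph}. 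This is a contradiction, so the path is unique.

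\emph{Every vertex on $P$ is contracted to $t_u$.} We have $f(u)=t_u$ (shown in the proof of Lemma~\ref{lemma:same-terminal-distance}) and $f(v)=t_u$. Since $G_0[B_u]$ is connected and contains both $u$ and $v$, there is a $u-v$ path $Q$ inside $G_0[B_u]$. Let $P'$ be the $G_0$-part of $P$. If $P'\subseteq Q$, then every vertex of $P'$ lies in $B_u$, so $f(w)=t_u$ for all $w\in P'$; also $f(t_u)=t_u$, which covers all of $P$. Otherwise Proposition~\ref{proposition:girth} gives $\ell_G(P')+\ell_G(Q)\ge \frac{\log k}{10}$. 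We need this to be impossible.

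\emph{Bounding $Q$ (main obstacle).} The length of $Q$ is not obviously controlled: $B_u$ is connected, but a path within it could be long. The plan is to avoid an arbitrary $Q$ and build a short one. Take a shortest-path tree of $G_0[B_u]$ rooted at $u$. Every $w\in B_u$ satisfies $\distg(u,w)\le\frac{\log k}{50}$ by Lemma~\ref{lemma:same-terminal-distance}, but distance inside $G_0[B_u]$ may exceed distance in $G_0$, so this alone does not bound $Q$. The fix is induction on the depth of $v$ in a BFS tree of $G_0[B_u]$ from $u$. Let $v'$ be the parent of $v$. By induction, the unique shortest $u-v'$ path $P_{v'}$ lies inside $B_u$. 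Then $P_{v'}$ plus the edge $v'v$ is a $u-v$ walk $W$ with
\[
\ell_G(W)\le \frac{\log k}{50}+1,
\]
using Lemma~\ref{lemma:same-terminal-distance} for $v'$. If $P'\not\subseteq W$, then Proposition~\ref{proposition:girth} gives $\frac{\log k}{10}\le \frac{\log k}{50}+\frac{\log k}{50}+1$, which is false for large $k$. Hence $P'\subseteq W\subseteq G_0[B_u]$, completing the induction. The base cases are $v=u$, and $v$ adjacent to $u$, where $P'$ is the single edge by girth. The care needed is to apply Proposition~\ref{proposition:girth} to the walk $W$ rather than to an arbitrary path in $B_u$.
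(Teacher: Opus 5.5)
Your proof is correct. The uniqueness part is the paper's argument: the paper takes the symmetric difference $P_1\Delta P_2$ of two shortest paths, while you invoke Proposition~\ref{proposition:girth}, which is the same girth bound. For the containment part, however, you take a different route.

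The paper never inducts. It considers the ball $I_u=\{w\in V_0:\distg(u,w)\le\frac{\log k}{50}\}$ and notes that $G_0[I_u]$ is connected and, because $\mathrm{girth}(G,\ell_G)>2\cdot\frac{\log k}{50}+1$, acyclic. So $G_0[I_u]$ is a tree containing the shortest path $P_0$. Since $B_u\subseteq I_u$ by Lemma~\ref{lemma:same-terminal-distance} and $G_0[B_u]$ is connected, any $u$--$v$ path inside $G_0[B_u]$ is a path in this tree and must equal $P_0$.

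You avoid the ball and instead induct on BFS depth inside $G_0[B_u]$. At each step you compare the shortest $u$--$v$ path with the short walk ``shortest path to the parent plus one edge'', whose containment in $B_u$ comes from the induction hypothesis.

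The paper's version is a one-shot structural statement: connected subsets of a tree contain the tree paths between their vertices. Its acyclicity claim, however, implicitly relies on the BFS-tree fact that every non-tree edge of the ball closes a cycle of length at most $2\cdot\frac{\log k}{50}+1$. Your induction essentially runs that BFS argument directly on $B_u$ and uses nothing beyond Proposition~\ref{proposition:girth}. So it is more self-contained, at the price of an induction.

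Both arguments need the same numerical inequality $\frac{\log k}{10}>\frac{\log k}{25}+1$. Your ``for large $k$'' is in fact automatic: every $T$-minor has stretch at least $1$ by Lemma~\ref{lemma:min-edge-length-minor}, so the hypothesis $\alpha\le\frac{\sqrt{\log k}}{200}$ already forces $\log k\ge 200^2$.
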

\begin{proof}
    We first prove that the shortest path between $u$ and $v$ is unique. By Lemma~\ref{lemma:same-terminal-distance}, $\distg(u,v)\leq \frac{\log k}{50}$. If there exists two different shortest $u-v$ paths $P_1$ and $P_2$, then edges in $P_1\Delta P_2$ form one or more disjoint cycles, which implies that $\ell_G(P_1)+\ell_G(P_2)\geq \mathrm{girth}(G, \ell_G)$. This contradicts $\ell_G(P_1)=\ell_G(P_2)\leq \frac{\log k}{50}$ and $\mathrm{girth}(G, \ell_G)>\frac{\log k}{10}$. This immediately implies that the shortest path between $t_u$ and $v$ in $G$ is unique, since vertex $u$ is the only neighbor of terminal $t_u$ in $G$.

    Let $P_0$ be the unique shortest path between $u$ and $v$ in $G_0$. We observe that concatenating $\{t_u,u\}$ and $P_0$ gives the shortest path $P$ between $t_u$ and $v$. Since $f(t_u)=t_u$, it suffices to show that $B_u$ contains all vertices on the path $P_0$.
    We define
    $$I_u:=\left\{v\in V_0:\distg(u,v)\leq \frac{\log k}{50}\right\}.$$
    We observe that $u\in I_u$ and $G_0[I_u]$ is a connected subgraph of $G_0$. According to Lemma~\ref{lemma:same-terminal-distance}, $v\in I_u$, which further implies that $G_0[I_u]$ contains all vertices on the path $P_0$. Moreover, $G_0[I_u]$ is acyclic, since the girth of $(G,\ell_G)$ is at least
    $$\mathrm{girth}(G, \ell_G)\geq \frac{\log k}{10} > 2\cdot \frac{\log k}{50}+1.$$
    Thus, $P_0$ is the unique path between $u$ and $v$ in the subgraph $G_0[I_u]$.

    According to Lemma~\ref{lemma:same-terminal-distance}, $B_u\subseteq I_u$. We recall that $G_0[B_u]$ is a connected subgraph of $G_0$. Since $P_0$ is the unique path between $u$ and $v$ in the subgraph $G_0[I_u]$, it follows that $B_u$ contains all vertices on the path $P_0$. Hence, for every vertex $w$ on path $P$, we have that $f(w)=t_u$.
\end{proof}

We call an edge $e=\{a, b\}\in E$ to be a \emph{special edge} if $f(a)\neq f(b)$. Lemmas~\ref{lemma:same-terminal-distance} and \ref{lemma:same-terminal-shortest} together imply the following lemma.

\begin{lemma}\label{lemma:special-edge}
    Suppose $H$ is a \terminalminor of $G$ with $\ell_G$-stretch $\alpha \leq \frac{\sqrt{\log k}}{200}$.
    For every edge $e=\{t_u, t_v\}\in E_H$, the shortest path $P$ between $t_u$ and $t_v$ in $G$ is unique. Moreover, it contains exactly one special edge.
\end{lemma}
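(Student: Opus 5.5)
The plan is to use the structure of how a \terminalminor edge arises. If $\{t_u,t_v\}\in E_H$ with $t_u\neq t_v$, then some edge $\{a,b\}\in E_0$ has $f(a)=t_u$ and $f(b)=t_v$. Pendant edges cannot join different clusters, since $f(t_w)=f(w)=t_w$. By Lemma~\ref{lemma:same-terminal-distance} we get $\distg(u,a)\le \frac{\log k}{50}$ and $\distg(v,b)\le\frac{\log k}{50}$.

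\textbf{Step 1: uniqueness of the shortest path.} Let $P_a$ be the unique shortest $t_u$--$a$ path and $P_b$ the unique shortest $t_v$--$b$ path, both given by Lemma~\ref{lemma:same-terminal-shortest}. Form the walk $W=P_a\cdot\{a,b\}\cdot P_b^{-1}$ from $t_u$ to $t_v$. Its $G_0$-part has length at most $\frac{\log k}{25}+1$. Any shortest $t_u$--$t_v$ path $Q$ consists of the two pendant edges plus a shortest $u$--$v$ path in $G_0$, whose length is at most that of the $G_0$-part of $W$. Apply Proposition~\ref{proposition:girth} to the $u$--$v$ parts: either $Q\subseteq W$, or $\ell_G(Q_0)+\ell_G(W_0)\ge \mathrm{girth}\ge\frac{\log k}{10}$. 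The second option fails because the sum is at most $\frac{2\log k}{25}+2<\frac{\log k}{10}$ for large $k$. Hence every shortest $u$--$v$ path lies inside the edge set of $W_0$. That edge set is the union of two tree paths and one edge, and it is acyclic for the same girth reason. So it contains a unique $u$--$v$ path, which proves uniqueness. (The same argument applied to two shortest paths directly, as in Lemma~\ref{lemma:same-terminal-shortest}, also works.)

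\textbf{Step 2: exactly one special edge.} The unique path $P$ is the $u$--$v$ path inside the tree $P_a\cup\{a,b\}\cup P_b$. This tree is a path through $a$ and $b$ with possible overlaps, so $P$ is obtained by taking the portion of $P_a$ from $t_u$ to some vertex $x$ and the portion of $P_b$ from $t_v$ to some vertex $y$, joined appropriately. The vertices of $P_a$ all map to $t_u$, and those of $P_b$ all map to $t_v$, by Lemma~\ref{lemma:same-terminal-shortest}. Since $t_u\neq t_v$, the sets $V(P_a)$ and $V(P_b)$ are disjoint. Therefore the tree is exactly the concatenation of $P_a$, the edge $\{a,b\}$ and $P_b$, without backtracking, and it is itself a simple path equal to $P$. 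Every edge of $P$ other than $\{a,b\}$ lies within a single cluster, and $\{a,b\}$ is special. So $P$ has exactly one special edge.

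\textbf{Main obstacle.} The delicate part is the girth comparison in Step 1. The bound $\frac{\log k}{25}+1$ on the $G_0$-part of $W$ must be strictly below half the girth, and this requires $k$ large enough. The disjointness argument in Step 2 then removes any need to worry about $P_a$ and $P_b$ overlapping.
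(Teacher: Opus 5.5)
Your route is the paper's route: take an edge of $G$ realizing $\{t_u,t_v\}\in E_H$, glue it to the two cluster paths from Lemma~\ref{lemma:same-terminal-shortest}, use disjointness of the clusters to get a simple path, and use girth for uniqueness and the single special edge. There is one real gap. It is your opening claim that pendant edges cannot join different clusters because $f(w)=t_w$ for every $w\in V_0$.

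That claim is false for general $T$-minors. A minor may keep the cluster of $t_u$ as the singleton $\{t_u\}$ and contract $u$ into a neighbouring cluster, e.g.\ $\{t_w,w,u\}$ with $w\sim u$. Then the edge $\{t_u,t_w\}\in E_H$ is realized only by the pendant edge $\{t_u,u\}$, and there is no $\{a,b\}\in E_0$ for your argument to start from. Neither Lemma~\ref{lemma:same-terminal-distance} nor Lemma~\ref{lemma:same-terminal-shortest} excludes this. The observation in the proof of Lemma~\ref{lemma:same-terminal-distance} that $f(u)=t_u$ uses $B_u\neq\emptyset$, i.e.\ that some $v$ with $f(v)=t_u$ exists. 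So your proof covers only edges of $H$ realized inside $G_0$, while the lemma is stated for every $e\in E_H$.

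The paper treats exactly this case by allowing $a=t_u$ (or $b=t_v$), with $P_1$ the trivial path, and your argument repairs the same way. Take $P_a=(t_u)$. Since $f(u)=t_v$, Lemma~\ref{lemma:same-terminal-shortest} gives a unique shortest $t_v$--$u$ path $P_b$, and Lemma~\ref{lemma:same-terminal-distance} shows the $G_0$-part of $W$ has length $\mathrm{dist}(u,v)\le\frac{\log k}{50}$. Also $V(P_a)\cap V(P_b)=\emptyset$ because $f(t_u)=t_u\neq t_v$. Steps 1 and 2 then go through unchanged, with the pendant edge $\{t_u,u\}$ as the unique special edge.
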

\begin{proof}
    Since $t_u\sim t_v$ in minor $H$, there exists an edge $e'=\{a,b\}\in E$ with $f(a)=t_u$ and $f(b)=t_v$. Let $P_1$ be the shortest path between $t_u$ and $a$ in $G$. If $a=t_u$, then $P_1$ is a trivial path containing the single vertex $t_u$ with $\ell_G(P_1)=0$. Otherwise, $a\in V_0$. By Lemma~\ref{lemma:same-terminal-shortest}, the shortest path between $t_u$ and $a$ is unique, and every vertex $w$ on path $P_1$ has $f(w)=t_u$. According to Lemma~\ref{lemma:same-terminal-distance},
    \[
        \ell_G(P_1)=\distg(t_u,a)=\distg(t_u,u)+\distg(u,a)\leq \frac{\log k}{50}+L.
    \]
    We may similarly define $P_2$ as the unique shortest path between $b$ and $t_v$ in $G$, where every vertex $w$ on path $P_2$ has $f(w)=t_v$ and $\ell_G(P_2)\leq \frac{\log k}{50}+L$.

    We now define a $t_u-t_v$ walk $P$ in $G$, which is obtained by concatenating $P_1$, $e'=\{a,b\}$, and $P_2$. We observe that $P$ is actually a $t_u-t_v$ path since $P_1$ and $P_2$ are vertex-disjoint. The length of path $P$ satisfies
    $$\begin{aligned}
        \ell_G(P) &= \ell_G(P_1)+\ell_G(e')+\ell_G(P_2) \leq \left(\frac{\log k}{50}+L\right)+L+\left(\frac{\log k}{50}+L\right) < \frac{\mathrm{girth}(G, \ell_G)}{2}.
    \end{aligned}$$
    If there exists another $t_u-t_v$ path $P'$ with length $\ell_G(P')\leq \ell_G(P)$, edges in $P\Delta P'$ form one or more disjoint cycles, which implies that $\ell_G(P)+\ell_G(P')\geq \mathrm{girth}(G, \ell_G)$. This contradicts $\ell_G(P')\leq \ell_G(P)<\frac{\mathrm{girth}(G, \ell_G)}{2}$. Hence, $P$ is the unique shortest $t_u-t_v$ path in $G$. Moreover, $e'$ is the only special edge in $P$.
\end{proof}

The following lemma crucially uses the expansion property of $G_0$ in addition to $3$-regularity and large girth, which allows us to show that if every terminal has at most half of the vertices of $G_0$ contracted into it, then 
there exists an $L$-length path in $G_0$ with a constant fraction of special edges. 

\begin{lemma}\label{lem:special-path}
        Suppose $H$ be a \terminalminor of $G$ with $|B_u|\leq k/2$ for every $u\in V_0$. 
        Then, there exists a path $P$ in $G_0$ such that $P$ has exactly $L$ edges, among which at least $L/60$ edges are special.
    \end{lemma}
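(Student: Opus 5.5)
Since all clusters satisfy $|B_u|\le k/2$, the expansion property of Lemma~\ref{lemma:cayley-graph} gives $|\delta_{G_0}(B_u)|\ge |B_u|/20$ for every nonempty cluster. Each special edge of $G_0$ lies in the boundary of exactly two clusters, so summing over $u$ shows that $G_0$ contains at least $\frac{1}{2}\sum_u |B_u|/20 = k/40$ special edges. $G_0$ is $3$-regular with $3k/2$ edges, so a fraction at least $\frac{k/40}{3k/2}=1/60$ of the edges of $G_0$ are special.

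Next we count paths of exactly $L$ edges. Since $\mathrm{girth}(G_0)\ge \frac{\log k}{10}$ is much larger than $L=\lceil\sqrt{\log k}\rceil$ for large $k$, every non-backtracking walk of length $L$ is a simple path. Consider the family $\mathcal{W}$ of non-backtracking walks of length $L$; it has $3k\cdot 2^{L-1}$ members. For each position $i\in\{1,\dots,L\}$, the $i$-th edge of a uniformly random walk in $\mathcal{W}$ is uniform over the oriented edges of $G_0$. This holds because non-backtracking walks on a regular graph are stationary with respect to the uniform distribution on directed edges: each directed edge has exactly $2$ successors and $2$ predecessors, so the number of walks whose $i$-th directed edge is a given $\vec e$ equals $2^{i-1}\cdot 2^{L-i}=2^{L-1}$, independent of $\vec e$. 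Each undirected edge therefore appears in the same number of (position, walk) pairs. Double counting gives that the average number of special edges over walks in $\mathcal{W}$ is exactly $L$ times the fraction of special edges, which is at least $L/60$. Hence some walk in $\mathcal{W}$ has at least $L/60$ special edges, and by the girth argument it is a path with exactly $L$ edges.

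The main point needing care is this uniformity/double-counting step. I would verify it by the explicit count $2^{i-1}\cdot 2^{L-i}$ given above. Alternatively, one can avoid walks and count simple paths directly, using that large girth makes every length-$L$ ball tree-like. The remaining checks are minor: we need $L<\mathrm{girth}(G_0)$, which holds for $k$ larger than an absolute constant (we can assume this since $K$ is arbitrary), and clusters with $B_u$ empty contribute nothing to the sum.
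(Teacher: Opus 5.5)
Your proposal is correct and follows essentially the same route as the paper. Expansion, together with each special edge lying on exactly two cluster boundaries, gives at least $k/40$ special edges. Large girth makes the $3k\cdot 2^{L-1}$ non-backtracking walks exactly the ordered $L$-edge paths, each edge occupies $L\cdot 2^{L}$ (position, path) slots, and averaging yields $L/60$. Your directed-edge ``stationarity'' phrasing via $2^{i-1}\cdot 2^{L-i}$ is just a repackaging of the paper's explicit per-edge count.
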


    \begin{proof}
    Our proof is based on a counting argument. 
    We will first show that the number of $L$-edge paths is small. Next, we will show that the number of times special edges appear in $L$-edge paths is large. Using these two facts and the pigeonhole principle, we will conclude that there exists a $L$-edge path with a large number of special edges.

    We define 
    \[
    \mathcal{J}:=\{(p_0, p_1, \ldots, p_L): p_i\neq p_{j} \text{ for every distinct } i,j\in [0,L] \text{ and } \{p_{i-1},p_{i}\}\in E_0\ \forall\ i\in [L]\},
    \]
    i.e., $\mathcal{J}$ is the set of ordered $L$-edge paths in $G_0$. We now count the number of paths in $\mathcal{J}$. 
    \begin{claim}\label{claim:number-of-paths}
        \[
        |\mathcal{J}|=2^{L-1}\cdot 3k.
        \]
    \end{claim}
    \begin{proof}
        Consider the following procedure: pick an arbitrary vertex $p_0\in V_0$, pick $p_1$ to be an arbitrary neighbor of $p_0$, and pick $p_i$ to be an arbitrary neighbor of $p_{i-1}$ that is distinct from $p_{i-2}$ for every $i=2, 3, \ldots, L$. Since $L<\mathrm{girth}(G_0)$, the ordered set $(p_0, p_1, \ldots, p_L)$ constructed by this procedure is a path (without duplicated vertices) and hence, is present in $\mathcal{J}$. Moreover, every ordered set $(p_0, p_1, \ldots, p_L)\in \mathcal{J}$ can be constructed by this procedure. Thus, $(p_0, p_1, \ldots, p_L)\in \mathcal{J}$ if and only if it can be constructed by the procedure. The number of ordered sets that can be constructed by the procedure is $kd(d-1)^{L-1}$, where $d$ is the degree of each vertex in $G_0$: we have $k$ choices for vertex $p_0$, $3$ choices for vertex $p_1$, and $2$ choices for vertex $p_i$ for every $i\in \{2, 3,\ldots, L\}$. Hence, $|\mathcal{J}|=2^{L-1}\cdot 3k$.
    \end{proof}

    Next, we show that the number of times special edges appear in ordered $L$-edge paths is large. 
    \begin{claim}
        \label{claim:number-of-special-edges-in-paths}
        \[
        \sum_{P\in \mathcal{J}}\left|\left\{e\in P: e\text{ is special}\right\}\right|
        \ge 2^L\cdot L\cdot \frac{k}{40}. 
        \]
    \end{claim}
    \begin{proof}

    We begin with a lower bound on the number of special edges. Let $u\in V_0$. We recall that $|B_u|\le k/2$. By the edge-expansion property in Lemma~\ref{lemma:cayley-graph}, we have
    \[
    \left|\left\{e=\{a,b\}\in E_0: e \text{ is special and } |\{a,b\}\cap B_u|=1\right\}\right|=|\delta(B_u)|\geq \frac{|B_u|}{20}. 
    \]
    Consequently,
    \begin{align}
    \left|\left\{e\in E_0: e \text{ is special}\right\}\right|
    &= \frac{1}{2}\sum_{u\in V_0}\left|\left\{e=\{a,b\}\in E_0: e \text{ is special and } |\{a,b\}\cap B_u|=1\right\}\right| \notag\\
    &\ge \frac{1}{2}\sum_{u\in V_0}\frac{|B_u|}{20} 
    =\frac{k}{40}. \label{eq:many-special-edges}
    \end{align}

    Next, we count the number of ordered $L$-edge paths in $G_0$ containing a fixed edge $e$. 
    Let $e=\{u,v\}\in E_0$ be an arbitrary edge. 
    Consider the following process: pick an arbitrary index $i \in [L]$ and let $\{p_{i-1}, p_i\}=\{u,v\}$. Next, for every $j=i-2, \ldots, 0$, pick $p_j$ to be an arbitrary neighbor of $p_{j+1}$ that is distinct from $p_{j+2}$ and for every $j=i+1, \ldots, L$, pick $p_j$ to be an arbitrary neighbor of $p_{j-1}$ that is distinct from $p_{j-2}$. 
    Since $L<\mathrm{girth}(G_0)$, the ordered set $(p_0, p_1, \ldots, p_{i-1}, p_{i}, \ldots, p_L)$ is a path (without duplicated vertices). Moreover, the $i$'th edge in this path is the edge $e$. We note that every ordered set $(p_0, p_1, \ldots, p_L)\in \mathcal{J}$ containing the edge $e$ can be constructed by this procedure. Hence, an ordered set $(p_0, p_1, \ldots, p_L)\in \mathcal{J}$ contains the edge $e$ if and only if it can be constructed by this procedure. 
    The number of ordered sets that can be constructed by this procedure is $2^L\cdot L$ (we have $L$ choices for the index $i$, two choices for $\{p_i,p_{i-1}\}=\{u,v\}$, and $p_j$ has $2$ choices for every $j\in\{0,1,2,\ldots,L\}\setminus \{i,i-1\}$). Hence, 
    \begin{equation}
        \left|\left\{(p_0,p_1,\ldots, p_L)\in \mathcal{J}: \exists i\in [L] \text{ with }\{p_{i-1},p_i\}=\{u,v\}\right\}\right|=2^L\cdot L. \label{eq:number-of-paths-with-a-fixed-edge}
    \end{equation}

    Hence,
    $$\begin{aligned}
        \sum_{P\in \mathcal{J}} \left|\left\{e\in P: e \text{ is special}\right\}\right|
        &= \sum_{P\in \mathcal{J}} \left(\sum_{e\in E_0: e \text{ is special}} 1_{e\in P}\right)= \sum_{e\in E_0: e \text{ is special}} \left(\sum_{P\in \mathcal{J}} 1_{e\in P}\right)\\
        &= \sum_{e\in E_0: e\text{ is special}} \left|\left\{(p_0,p_1,\ldots, p_L)\in \mathcal{J}: \exists i\in [L] \text{ with } \{p_{i-1},p_i\}=e\right\}\right|\\
        &= 2^L\cdot L\cdot \left|\left\{e\in E_0: e \text{ is special}\right\}\right| \quad \quad \text{(by equation \eqref{eq:number-of-paths-with-a-fixed-edge})}\\
        &\geq 2^L\cdot L\cdot \frac{k}{40}. \quad \quad \text{(by inequality \eqref{eq:many-special-edges})}.
    \end{aligned}$$
    \end{proof}

    We now complete the proof of Lemma \ref{lem:special-path}. By Claims \ref{claim:number-of-paths} and \ref{claim:number-of-special-edges-in-paths} and by the pigeonhole principle, there exists a path $P\in \mathcal{J}$ that contains at least 
    \[
    \frac{\sum_{P\in \mathcal{J}} \left|\left\{e\in P: e \text{ is special}\right\}\right|}{|\mathcal{J}|}\geq \frac{2^L\cdot L\cdot \frac{k}{40}}{2^{L-1}\cdot 3k}=\frac{L}{60}
    \]
    special edges. Thus, there exists a path $P$ in $G_0$ such that $P$ has exactly $L$ edges, among which at least $L/60$ edges are special. 
    \end{proof}

\section{Proof of Large Stretch}\label{section:proof}

Let $G=(V, E)$ be the graph with terminals $T$ and edge-length function $\ell_G: E\rightarrow \R_+$ constructed in Section \ref{section:instance}. We show that every \terminalminor of $G$ has a large $\ell_G$-stretch as follows, which directly implies Theorem~\ref{theorem:main-theorem}.

\begin{lemma}
    Every \terminalminor $H$ of $G$ has $\ell_G$-stretch larger than $\frac{\sqrt{\log k}}{200}$.
\end{lemma}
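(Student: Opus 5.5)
Suppose for contradiction that $H$ has $\ell_G$-stretch $\alpha\le \frac{\sqrt{\log k}}{200}$, so that Lemmas~\ref{lemma:same-terminal-distance}, \ref{lemma:same-terminal-shortest} and \ref{lemma:special-edge} all apply. The first step is to check the hypothesis of Lemma~\ref{lem:special-path}, namely $|B_u|\le k/2$ for every $u$. By Lemma~\ref{lemma:same-terminal-distance}, $B_u$ lies in the ball of radius $\frac{\log k}{50}$ around $u$ in the $3$-regular graph $G_0$. That ball has at most $1+3\cdot 2^{\log k/50}=O(k^{1/50})<k/2$ vertices for large $k$. So Lemma~\ref{lem:special-path} yields a path $P=(p_0,\ldots,p_L)$ in $G_0$ with at least $L/60$ special edges.

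Next, look at the terminals $t_{p_0}$ and $t_{p_L}$. Their $G$-distance is at most $2L+L=3L$, so low stretch gives $\disth(t_{p_0},t_{p_L})\le 3\alpha L$. Let $Q_H=(s_0,\ldots,s_r)$ be a shortest path in $H$. By Proposition~\ref{proposition:edge-length}, each edge of $Q_H$ has $\ell_H$-length more than $2L$ (indeed $\distg(u,v)+2L$). Hence $r\cdot 2L\le 3\alpha L$, so $r\le 3\alpha/2$, which is $O(\sqrt{\log k})$ and smaller than $L/60$ only if the constants cooperate. This gives $r\le \frac{3}{2}\cdot\frac{\sqrt{\log k}}{200}<\frac{L}{60}$, since $\frac{3}{400}<\frac1{60}$. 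So $H$ uses fewer minor edges than $P$ has special edges.

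Now replace each $H$-edge $\{s_{i-1},s_i\}$ by its unique shortest $G$-path. By Lemma~\ref{lemma:special-edge}, that path contains exactly one special edge. Concatenate these paths and drop the pendant edges at the intermediate terminals (each $t_w$ is entered and exited via $w$). The result is a walk $W$ in $G_0$ from $p_0$ to $p_L$ that contains at most $r$ special edges, with length
\[
\ell_G(W)\le \disth(t_{p_0},t_{p_L})\le 3\alpha L\le \frac{3\log k}{200}+o(\log k).
\]
Apply Proposition~\ref{proposition:girth} to $P$ and $W$. Their total length is at most $L+3\alpha L<\frac{\log k}{10}\le\mathrm{girth}$, so $P\subseteq W$. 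But then $W$ contains all of the at least $L/60>r$ special edges of $P$, contradicting the count of at most $r$ special edges in $W$.

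The main obstacle is the step of bounding the special edges of $W$. A special edge of $P$ lies in $W$, but it might not be one of the special edges counted in the $H$-paths. However, every edge of $W$ comes from some $G$-shortest path of an $H$-edge. Specialness depends only on $f$, and each such path contains exactly one special edge, so the total number of special edges in $W$ is at most $r$. It also needs care that $P\subseteq W$ is meant edgewise and that distinct special edges of $P$ remain distinct in $W$. Since $P$ is a path, its edges are distinct, and that suffices. Finally, I would double-check the constant inequalities $r<L/60$ and $L+3\alpha L<\log k/10$ for large $k$.
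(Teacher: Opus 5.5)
Your proof is correct and follows the paper's argument. It uses the same ball-counting bound to get $|B_u|\le k/2$, the same special path $P$ from Lemma~\ref{lem:special-path}, the same concatenation of unique shortest $G$-paths of $H$-edges, and the same girth plus one-special-edge-per-minor-edge counting. The only difference is organizational: the paper splits on whether $P\subseteq R$ to prove $\disth(t_u,t_v)\ge \frac{\log k}{60}$, whereas you use the assumed bound $\disth\le 3\alpha L$ to get both $r<L/60$ and, via girth, $P\subseteq W$, which rules out both cases at once.
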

\begin{proof}
    Suppose that $H$ is a \terminalminor of $G$ with $\ell_G$-stretch $\alpha\leq \frac{\sqrt{\log k}}{200}$.
    We recall that the \terminalminor $H$ can be represented by a mapping $f: V(G_0)\rightarrow T$, where $f(u)$ denotes the terminal to which the non-terminal vertex $u$ is contracted. Moreover, for every $u\in V_0$, we have $B_u=\{v\in V_0: f(v)=t_u\}$, i.e., $B_u$ is the set of non-terminal vertices that are contracted into the terminal $t_u$ in the minor $H$. We recall that an edge $e=\{u, v\}\in E_0$ is special if $f(u)\neq f(v)$. 

    Suppose there exists a vertex $u\in V_0$ with $|B_u|>\frac{k}{2}$. We recall that graph $G_0$ is $3$-regular. This implies that for every $h\in \Z_{\geq 0}$, the number of vertices $v\in V_0$ with $\distg(u,v)=h$ is at most $3^h$. Consequently,
    \[
    \left|\left\{v\in V_0: \distg(u,v)\le \frac{\log k}{50}\right\}\right|
    \le \sum_{h=0}^{\lfloor\frac{\log k}{50}\rfloor}3^h < 3^{\frac{\log k}{50}+1}<k/2<|B_u|.
    \]
    Hence, there exists a vertex $v\in B_u$ with $\distg(u,v)>\frac{\log k}{50}$, which contradicts Lemma~\ref{lemma:same-terminal-distance}.

    We may henceforth assume that for every vertex $u\in V_0$, $|B_u|\leq \frac{k}{2}$. 
    Let $P$ be a path in $G_0$ such that $P$ has exactly $L$ edges, among which at least $L/60$ edges are special---such a path is guaranteed to exist by Lemma \ref{lem:special-path}. Let $u$ and $v$ be the end vertices of path $P$. We note that $\distg(u,v)\le L$. Consequently, 
    \begin{equation}
        \distg(t_u, t_v)=2L+\distg(u,v)\le 3L. \label{eq:low-distance-in-G}
    \end{equation}
    Next, we show that $\disth(t_u,t_v)\geq \frac{\log k}{60}$. 

    \begin{claim}\label{claim:large-distance-in-H}
    \[
    \disth(t_u,t_v)\geq \frac{\log k}{60}.
    \]
    \end{claim}
    \begin{proof}
Let $Q=(q_0=t_u, q_1, \ldots, q_r=t_v)$ be a shortest $t_u-t_v$ path in $(H, \ell_H)$. For every $i\in [0, r-1]$, let $R_i$ be a shortest $q_i-q_{i+1}$ path in $(G, \ell_G)$. Let $R$ be the $t_u-t_v$ walk in $G$ obtained by concatenating the paths $R_0, R_1, \ldots, R_{r-1}$. We consider two cases:

\noindent \underline{\textsl{Case (a):}} Suppose $P\not\subseteq R$. We observe that the first edge in $R$ is $\{t_u,u\}$ and the last edge is $\{v,t_v\}$. Removing these two edges gives a $u - v$ walk. Hence, by Proposition \ref{proposition:girth}, $\ell_G(R) + \ell_G(P)\ge \mathrm{girth}(G, \ell_G)$. Thus,
\begin{align*}
\disth(t_u, t_v) =\ell_H(Q) &= \sum_{i=0}^{r-1}\disth(q_i, q_{i+1}) \\
    &\geq \sum_{i=0}^{r-1}\distg(q_i, q_{i+1}) \ \ \ \text{(by Lemma \ref{lemma:min-edge-length-minor})}\\
    &= \ell_G(R) \geq \mathrm{girth}(G, \ell_G)-\ell_G(P)\\
    &\geq \frac{\log k}{10} - L \geq \frac{\log k}{60}.
\end{align*}

\noindent \underline{\textsl{Case (b):}} Suppose $P\subseteq R$. We recall that $P$ contains $L/60$ special edges, which implies that $R$ also contains $L/60$ special edges. According to Lemma~\ref{lemma:special-edge}, for every $i\in [0, r-1]$, $R_i$ contains exactly one special edge. This shows that $r\geq L/60$. Thus,
$$\begin{aligned}
    \disth(t_u, t_v) &=\ell_H(Q) = \sum_{i=0}^{r-1}\ell_H(\{q_i, q_{i+1}\}) \geq \sum_{i=0}^{r-1} L \ \ \text{(by Proposition~\ref{proposition:edge-length})} \\
    &= r\cdot L \geq L^2/60 \geq \frac{\log k}{60}.
\end{aligned}$$
    \end{proof}
    
    Therefore, the \terminalminor $H$ has $\ell_G$-stretch
    $$\alpha\geq \frac{\disth(t_u, t_v)}{\distg(t_u, t_v)}\geq \frac{\log k/60}{3L}\geq\frac{\log k}{180(\sqrt{\log k}+1)}\geq\frac{\sqrt{\log k}-1}{180},$$
    which contradicts $\alpha\leq \frac{\sqrt{\log k}}{200}$ for sufficiently large $k$.
\end{proof}

\begin{remark}
    Our lower bound instance is almost identical to that of Chen and Tan \cite{chen2024lower} but our analysis is stronger. In particular, we exploit the expansion property of $G_0$ (see Lemma \ref{lem:special-path}), which was not the case in the analysis of Chen and Tan. 
\end{remark}
\section{Tightness of the instance}\label{sec:tightness}
The following lemma shows that the $\Omega(\sqrt{\log k})$ bound is tight for the instance constructed in Section~\ref{section:instance}. 

\begin{lemma}
    For the graph $(G, \ell_G)$ constructed in Section \ref{section:instance}, there exists a \terminalminor $H$ of $G$ with $\ell_G$-stretch $O(\sqrt{\log k})$.
\end{lemma}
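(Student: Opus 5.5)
The plan is to use the trivial minor: contract every $u\in V_0$ into its own pendant terminal $t_u$, so $f(u)=t_u$ and $B_u=\{u\}$ for every $u$. The resulting \terminalminor $H$ is isomorphic to $G_0$: $t_u$ and $t_v$ are adjacent in $H$ exactly when $\{u,v\}\in E_0$. By definition of $\ell_H$, each such edge has length $\ell_H(\{t_u,t_v\})=\distg(t_u,t_v)=2L+1$, because $\distg(u,v)=1$ and every path from $t_u$ leaves through the edge $\{t_u,u\}$ of length $L$. Hence $(H,\ell_H)$ is $G_0$ with every edge length scaled to $2L+1$.

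The key step is to compare distances. Fix distinct terminals $t_u,t_v$ and let $h:=\mathrm{dist}_{G_0}(u,v)\ge 1$ be the hop distance in $G_0$. Every $t_u$–$t_v$ path in $G$ starts with $\{t_u,u\}$ and ends with $\{v,t_v\}$. Any inner excursion through some other terminal $t_w$ costs $2L$ and returns to $w$, so it never helps. Therefore
\[
\distg(t_u,t_v)=2L+h .
\]
In $H$, a shortest path in $G_0$ gives $\disth(t_u,t_v)\le h(2L+1)$. The stretch for this pair is then
\[
\frac{h(2L+1)}{2L+h}.
\]

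Bounding this ratio is routine. If $h\le 2L$, the numerator is at most $h(2L+1)$ and the denominator is at least $2L$, giving a ratio of at most $h(2L+1)/(2L)\le 2L+1$. If $h>2L$, the denominator exceeds $h$, giving a ratio of at most $2L+1$. In both cases the stretch is at most $2L+1=2\lceil\sqrt{\log k}\rceil+1=O(\sqrt{\log k})$, as claimed.

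I do not expect a real obstacle here. The one point needing care is the exact formula for $\distg(t_u,t_v)$, namely checking that inner detours through pendant terminals never shorten paths. That holds because each pendant vertex has degree one. Note that the argument uses neither expansion nor girth, which is consistent with the claim that the lower-bound analysis is tight on this instance.
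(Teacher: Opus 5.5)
Your proposal is correct and follows the paper's proof: contract each $u\in V_0$ into $t_u$, note that $(H,\ell_H)$ is $G_0$ with every edge of length $2L+1$, and bound $\frac{h(2L+1)}{2L+h}$ by $2L+1=O(\sqrt{\log k})$. The case split is unnecessary, since $2L+h\ge h$ gives the bound directly, but it does no harm.
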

\begin{proof}
    For every vertex $u\in V_0$, we contract $u$ and $t_u$ together. This gives us a \terminalminor $H$ of $G$ where, for every two different vertices $u, v\in V_0$, terminals $t_u$ and $t_v$ are adjacent in $H$ if and only if $u$ and $v$ are adjacent in $G$. Meanwhile, every edge $e$ in $H$ has length exactly $\ell_H(e)=2L+1$.

    For every two different vertices $u, v\in V_0$, suppose that the shortest $u-v$ path in $G_0$ is $P=(p_0=u, p_1, \ldots, p_r=v)$ of length $r$. Then, the shortest $t_u-t_v$ path in $(H, \ell_H)$ is $(t_{p_0}, t_{p_1}, \ldots, t_{p_r})$ with length $(2L+1)\cdot r$. Hence, we have
    $$\distg(t_u, t_v)=2L+\distg(u,v)=2L+r,$$ and
    $$\disth(t_u, t_v)=(2L+1)\cdot r.$$

    Thus,
    $$1\leq \frac{\disth(t_u, t_v)}{\distg(t_u, t_v)}=\frac{(2L+1)\cdot r}{2L+r}<2L+1=O(\sqrt{\log k}),$$
    which implies that the \terminalminor $H$ of $G$ has $\ell_G$-stretch $O(\sqrt{\log k})$.
\end{proof}

\paragraph{AI Disclosure.} The authors used ChatGPT to assist with the writing of this manuscript. The authors assume full responsibility for all content. 

\bibliographystyle{abbrv}
\bibliography{references}

@InProceedings{CT24,
  author =	{Chen, Yu and Tan, Zihan},
  title =	{{Lower Bounds on 0-Extension with Steiner Nodes}},
  booktitle =	{51st International Colloquium on Automata, Languages, and Programming (ICALP 2024)},
  pages =	{47:1--47:18},
  year =	{2024},
}

@article{CKR05,
author = {Calinescu, Gruia and Karloff, Howard and Rabani, Yuval},
title = {Approximation Algorithms for the 0-Extension Problem},
journal = {SIAM Journal on Computing},
volume = {34},
number = {2},
pages = {358-372},
year = {2005},
}

@article{KKMR09,
author = {Karloff, Howard and Khot, Subhash and Mehta, Aranyak and Rabani, Yuval},
title = {On Earthmover Distance, Metric Labeling, and 0-Extension},
journal = {SIAM Journal on Computing},
volume = {39},
number = {2},
pages = {371--387},
year = {2009},
}

@inproceedings{Che18,
author = {Cheung, Yun Kuen},
title = {Steiner point removal: distant terminals don't (really) bother},
year = {2018},
booktitle = {Proceedings of the 29th Annual ACM-SIAM Symposium on Discrete Algorithms},
pages = {1353--1360},
numpages = {8},
series = {SODA}
}

@inproceedings{Gup01,
author = {Gupta, Anupam},
title = {Steiner points in tree metrics don't (really) help},
year = {2001},
booktitle = {Proceedings of the 12th Annual ACM-SIAM Symposium on Discrete Algorithms},
pages = {220--227},
series = {SODA}
}

@inproceedings{CXKR06,
author = {Chan, T.-H. and Xia, Donglin and Konjevod, Goran and Richa, Andrea},
title = {A tight lower bound for the steiner point removal problem on trees},
year = {2006},
booktitle = {Proceedings of the 9th International Conference on Approximation Algorithms for Combinatorial Optimization Problems, and 10th International Conference on Randomization and Computation},
pages = {70--81},
series = {APPROX-RANDOM}
}

@article{KKN15,
author = {Kamma, L. and Krauthgamer, R. and Nguy\~{\^e}n, H. L.},
title = {{Cutting Corners Cheaply, or How to Remove Steiner Points}},
journal = {SIAM Journal on Computing},
volume = {44},
number = {4},
pages = {975-995},
year = {2015},
}

@inproceedings{filtser2018steiner,
  title={{Steiner point removal with distortion $O(\log k)$}},
  author={Filtser, Arnold},
  booktitle={Proceedings of the 29th Annual ACM-SIAM Symposium on Discrete Algorithms},
  pages={1361--1373},
  year={2018},
  series = {SODA},
}

@inproceedings{chen2024lower,
  title={{An $\tilde{\Omega}(\sqrt{\log{|T|}})$ Lower Bound for Steiner Point Removal}},
  author={Chen, Yu and Tan, Zihan},
  booktitle={Proceedings of the 35th Annual ACM-SIAM Symposium on Discrete Algorithms},
  pages={694--698},
  year={2024},
  series = {SODA},
}

@article{lubotzky1988ramanujan,
  title={Ramanujan graphs},
  author={Lubotzky, Alexander and Phillips, Ralph and Sarnak, Peter},
  journal={Combinatorica},
  volume={8},
  number={3},
  pages={261--277},
  year={1988}
}

@inproceedings{CCLMST24minorfree,
  author    = {Chang, Hsien-Chih and Conroy, Jonathan and Le, Hung and
               Milenkovi{\'c}, Lazar and Solomon, Shay and Than, Cuong},
  title     = {Shortcut Partitions in Minor-Free Graphs: {S}teiner Point Removal,
               Distance Oracles, Tree Covers, and More},
  booktitle = {Proceedings of the 2024 Annual ACM-SIAM Symposium on Discrete
               Algorithms (SODA)},
  pages     = {5300--5331},
  year      = {2024},
}

@article{alon2021high,
  title={High-girth near-Ramanujan graphs with localized eigenvectors},
  author={Alon, Noga and Ganguly, Shirshendu and Srivastava, Nikhil},
  journal={Israel Journal of Mathematics},
  volume={246},
  number={1},
  pages={1--20},
  year={2021},
  publisher={Springer}
}

\newpage
\appendix

\end{document}